\newtheorem{thm}{Theorem}
\newcolumntype{Y}{>{\centering\arraybackslash}X}
\newcolumntype{V}{!{\vrule width 2pt}X}
\begin{document}

\begin{center}
\Large\bf Geometric genuine $N$-partite entanglement measure for arbitrary dimensions
\end{center}

\begin{center}
\rm  Hui Zhao,$^{*1}$ Pan-Wen Ma,$^{1}$ \ Shao-Ming Fei$^{2}$ and \ Zhi-Xi Wang $^{2}$
\end{center}

\begin{center}
\begin{footnotesize} \sl
$^1$ School of Mathematics, Statistics and Mechanics, Beijing University of Technology, Beijing 100124, China

$^2$ School of Mathematical Sciences,  Capital Normal University,  Beijing 100048, China

\end{footnotesize}
\end{center}
\footnotetext{\\{*}Corresponding author, zhaohui@bjut.edu.cn}
\vspace*{2mm}

\begin{center}
\begin{minipage}{15.5cm}
\parindent 20pt\footnotesize
We present proper genuine multipartite entanglement (GME) measures for arbitrary multipartite and dimensional systems. By using the volume of concurrence regular polygonal pyramid we first derive the GME measure of four-partite quantum systems. From our measure it is verified that the GHZ state is more entangled than the W state. Then we study the GME measure for multipartite quantum states in arbitrary dimensions. A well defined GME measure is constructed based on the volume of the concurrence regular polygonal pyramid. Detailed example shows that our measure can characterize better the genuine multipartite entanglements.
\end{minipage}
\end{center}

\begin{center}
\begin{minipage}{15.5cm}
\begin{minipage}[t]{2.3cm}{\bf Keywords:}\end{minipage}
\begin{minipage}[t]{13.1cm}
Genuine multipartite entanglement, genuine multipartite entanglement measure, concurrence
\end{minipage}\par\vglue8pt
\end{minipage}
\end{center}

\section{Introduction}
Quantum entanglement \cite{ref1,ref2} is one of the novel phenomena of quantum mechanics that distinguishes the quantum world and the classical one. It plays a central role in advanced quantum technologies such as quantum teleportation \cite{ref3,ref4}, quantum key distribution \cite{ref5}, superdense coding \cite{ref6} and quantum computing \cite{ref7,ref8}. The theory of quantum entanglement has attracted much attention. For a bipartite system, entanglement is quantified by various measures such as concurrence \cite{ref9}, negativity \cite{ref10} and entanglement formation \cite{ref11}. A particular important class of multipartite entanglement is the genuine multipartite entanglement (GME) \cite{ref12,ref13, ref14}. A multipartite state is genuinely entangled when it can't be expressed as the convex combination of biseparable states (separable under some bipartitions). A key issue is to give a suitable measure of GME for quantifying the genuine multipartite entanglement.

A well defined GME measure $E$ has to satisfy the following conditions. (i)For all product and biseparable states, the measure must be zero. (ii) It is positive for all non-biseparable states. (iii) It is non-increasing under local operations and classical communications (LOCC), { i.e., $E(\Lambda_{\text{LOCC}}(\rho))\leq E(\rho)$}. A GME measure of any product or biseparable states takes value zero [15,16]. Namely, when the value of the GME measure is $0$, the state is separable under one kind of partition, but the other subsystems may still remain entanglement. The vanishing GME measure does not necessarily imply the fully separability. This is different from the condition of an entanglement measure [17]: it is non-zero if and only if the state is entangled.
Although the experimental observation of multipartite entanglement has been successfully realized in \cite{ref18,ref19}, the quantification of multipartite entanglement is still far from being satisfied.

There are three known GME measures for three-qubit systems. The genuine multipartite concurrence presented by Ma et al. \cite{ref20} is exactly the minimum concurrence between each single qubit and the remaining ones. Based on the distance between a given state and its closest biseparable states, the authors provided the generalized geometric measure in \cite{ref21}. In \cite{ref22}, the authors presented a measure which is given by the average of 3-tangle. Very recently, Xie and Eberle \cite{ref15} introduced a measure to quantify the GME of three-qubit systems, which has a simple form and elegant geometric interpretation. Whereafter, in \cite{ref16}, the authors proposed an improved GME measure by using the geometric mean area of these concurrence triangles. An approach of constituting GME measure using the area of the triangle and the superficial area of the tetrahedron was given in \cite{ref23}, while whether such measures are non-increasing under LOCC remains unknown. Based on the concurrence of nine different classes of four-qubit states, proper genuine four-qubit entanglement measure is presented in \cite{ref24} by using the volume of the concurrence tetrahedron. Nevertheless, extending this geometric measurement to high dimensional systems is not as simple as it seems.

In this paper, we study the GME measures by using concurrence regular polygonal pyramid. The paper is organized as follows. In Section 2, we first review some basic concepts regarding concurrence and construct concurrence rectangular pyramid to derive GME measures for four-partite quantum systems. By detailed example, we show that our criteria are more efficient than the existing one. In Section 3, we present a well defined GME measure by using the volume of concurrence regular polygonal pyramid for multipartite quantum states in arbitrary dimensions. Conclusions are given in Section 4.

\section{GME measure for four-partite pure states}
We first consider the GME measure for four-partite systems. Let $\mathcal {H}_f$ denote a $d_{f}$-dimensional Hilbert vector space. For a bipartite pure state $\rho=|\psi_{12}\rangle\langle\psi_{12}|$ in a finite-dimensional Hilbert space $ \mathcal {H}_1\otimes  \mathcal {H}_2=\mathbb{C}^{d_1}\otimes\mathbb{C}^{d_2}$, the concurrence is given by \cite{ref9},
\begin{equation}
C(|\psi_{12}\rangle)=\sqrt{2[1-Tr{(\rho_1^2)}]},
\end{equation}
where $\rho_1=Tr_2(\rho)$ is the reduced density matrix of the first subsystem.

We have two types of entanglements under two different bipartitions, $C_{i|jkl}(|\psi\rangle)$ with respect to one-to-three bipartition and $C_{ij|kl}(|\psi\rangle)$ with respect to two-to-two bipartition, where $i\neq j\neq k\neq l\in\{1,2,3,4\}$. Denote
\begin{equation}
a=\sqrt[4]{C_{1|234}(|\psi\rangle)C_{2|134}(|\psi\rangle)
C_{3|124}(|\psi\rangle)C_{4|123}(|\psi\rangle)}
\end{equation}
and
\begin{equation}
h=\sqrt[3]{C_{12|34}(|\psi\rangle)C_{13|24}(|\psi\rangle)C_{14|23}(|\psi\rangle)}.
\end{equation}
We obtain a rectangular pyramid with $a$ as the length of the bottom edges and $h$ as the height, see Figure 1. We call it concurrence rectangular pyramid. The volume $\frac{1}{3}a^2*h$ of the concurrence rectangular pyramid has a simple form and an elegant geometric interpretation of entanglement. For both product and biseparable states, the volume is zero and thus the condition (i) of GME measure is satisfied. {This shows a correlation between the four-partite entanglement and the concurrence rectangular pyramid. The fully separable, bi-separable and non-biseparable states are associated with the concrete vertices, edges or facets and rectangular pyramid, respectively, providing a geometric identification of entanglement.} In Figure 2 we list the relations between the entanglement of four-partite pure states and the corresponding concurrence rectangular pyramid.
\begin{figure}[ht]
  \centering{\includegraphics[width=0.5\textwidth]{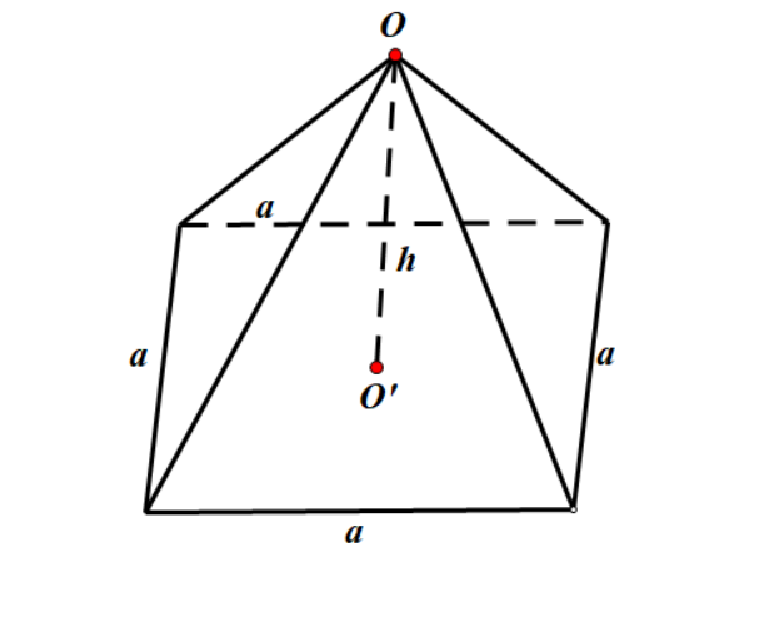}}\\
  \caption{Schematic diagram of concurrent rectangular pyramid. The base of the concurrent rectangular pyramid is a square with side length $a$, $O'$ is the center of the square. $O$ is the vertex of the concurrent rectangular pyramid. $OO'$ is the height of the concurrent rectangular pyramid perpendicular to the base with length $h$.}\label{Figure.1}
\end{figure}

Using the volume of the concurrence rectangular pyramid to define a GME measure for four-partite pure states, we have the following theorem.

\begin{figure}[ht]
  \centering{\includegraphics[width=0.98\textwidth]{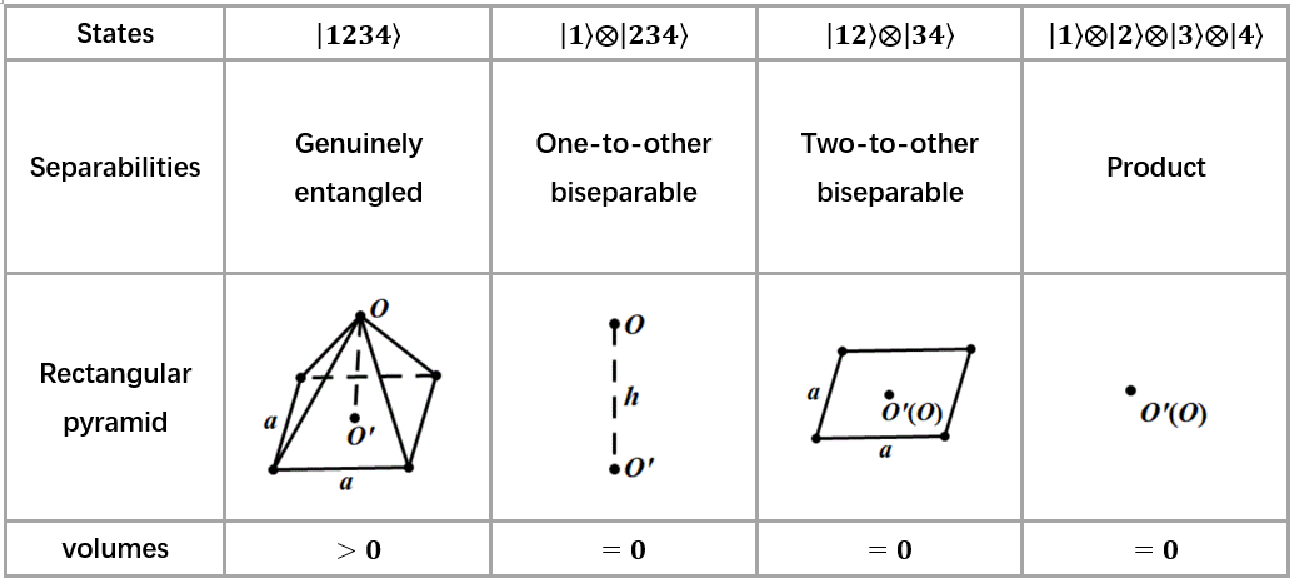}}\\
  \caption{Schematic diagram of entanglement and concurrence rectangular pyramid.}\label{Figure.2}
\end{figure}

\begin{thm}
For any four-partite pure state $|\psi\rangle\in \mathcal {H}_{1}^{d_1}\otimes \mathcal {H}_{2}^{d_2}\otimes \mathcal {H}_{3}^{d_3}\otimes \mathcal {H}_{4}^{d_4}$, the volume of the concurrence rectangular pyramid defines a GME measure,
\begin{equation}
V_{1234}(|\psi\rangle)=\frac{1}{3}{a}^{2}h,
\end{equation}
where $a$ is the length of bottom edges and $h$ is the height of the concurrence rectangular pyramid.
\end{thm}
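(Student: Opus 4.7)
The plan is to verify the three defining axioms of a GME measure for $V_{1234}=\tfrac{1}{3} a^2 h$ by reducing each to a corresponding property of the seven bipartite concurrences $\{C_{i|jkl}\}_{i=1}^{4}$ and $\{C_{ij|kl}\}_{\{ij,kl\}}$ that assemble into $a$ and $h$. The essential input is the classical result of \cite{ref9} that every bipartite concurrence is itself an entanglement monotone, vanishes exactly on states that are separable across the corresponding bipartition, and is invariant under local unitaries.

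For axioms (i) and (ii), I would first observe that a pure four-partite state $|\psi\rangle$ is biseparable precisely when it factorizes across at least one bipartition. If that bipartition has the form $i|jkl$, then the factor $C_{i|jkl}$ inside $a$ vanishes and forces $a=0$; if it has the form $ij|kl$, then the factor $C_{ij|kl}$ inside $h$ vanishes and forces $h=0$. In either case $V_{1234}(|\psi\rangle)=0$, which gives (i). Conversely, for any genuinely multipartite entangled pure state all seven bipartite concurrences are strictly positive, so both $a$ and $h$, and hence $V_{1234}(|\psi\rangle)$, are strictly positive; this is (ii).

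For axiom (iii), I would use the componentwise monotonicity of the seven concurrences. For a deterministic LOCC channel $\Lambda_{\text{LOCC}}$, the inequality $C_{I|\bar I}(\Lambda_{\text{LOCC}}(\rho))\le C_{I|\bar I}(\rho)$ holds for every bipartition $I|\bar I$; raising each inequality to the corresponding geometric-mean exponent and multiplying yields $a(\Lambda_{\text{LOCC}}(|\psi\rangle))\le a(|\psi\rangle)$ and $h(\Lambda_{\text{LOCC}}(|\psi\rangle))\le h(|\psi\rangle)$, whence $V_{1234}(\Lambda_{\text{LOCC}}(|\psi\rangle))\le V_{1234}(|\psi\rangle)$, matching the form of (iii) stated in the paper. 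The main obstacle I anticipate is upgrading this to the stronger stochastic form $\sum_{k} p_k V_{1234}(|\psi_k\rangle)\le V_{1234}(|\psi\rangle)$ under a general LOCC protocol $|\psi\rangle\mapsto\{p_k,|\psi_k\rangle\}$, since direct termwise multiplication of the strong-monotonicity inequalities for the individual concurrences is not available. To handle it I would pass to $V_{1234}^{1/3}$, which is a weighted geometric mean of the seven concurrences with exponents $4\cdot\tfrac{1}{6}+3\cdot\tfrac{1}{9}=1$ forming a probability vector, and apply H\"older's inequality together with the stochastic monotonicity of each $C_{I|\bar I}$ to obtain the desired inequality for $V_{1234}^{1/3}$.
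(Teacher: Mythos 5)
Your proposal is correct and follows essentially the same route as the paper: conditions (i)--(ii) reduce to the fact that a pure state's bipartite concurrence vanishes iff the state factorizes across that cut, and condition (iii) reduces to the LOCC-monotonicity of each individual concurrence combined with the observation that $V_{1234}$ is a product of nonnegative powers of the seven concurrences (the paper verifies this last point by computing partial derivatives, you by multiplying the seven inequalities directly --- a cosmetic difference). The ``main obstacle'' you anticipate is not one the theorem asks you to clear: Remark~1 explicitly requires only the deterministic monotonicity $E(\Lambda_{\text{LOCC}}(\rho))\leq E(\rho)$ rather than the ensemble-averaged version, and in any case your H\"older argument would yield strong monotonicity only for $V_{1234}^{1/3}$, which does not transfer to $V_{1234}$ itself because $x\mapsto x^{3}$ is convex.
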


\begin{proof}
We first prove that $|\psi\rangle$ is GME iff $V_{1234}(|\psi\rangle)>0$. On one hand, if $V_{1234}(|\psi\rangle)>0$, then $a>0$ and $h>0$, that is to say, $C_{i|jkl}(|\psi\rangle)$ and $C_{ij|kl}(|\psi\rangle)$ are all positive. Hence $|\psi\rangle$ is a genuine multipartite entangled state. On the other hand, if $V_{1234}(|\psi\rangle)=0$, the volume of the concurrence rectangular pyramid is zero iff either the length of bottom edge or the height of the pyramid or both are $0$. We obtain that at least one of $\{C_{1|234}(|\psi\rangle), C_{2|134}(|\psi\rangle), C_{3|124}(|\psi\rangle), C_{4|123}(|\psi\rangle),\\
C_{12|34}(|\psi\rangle), C_{13|24}(|\psi\rangle), C_{14|23}(|\psi\rangle)\}$ is $0$. Without loss of generality, suppose $C_{13|24}(|\psi\rangle)=0$, namely, $|\psi\rangle=\rho_{13}\otimes\rho_{24}$. Therefore $|\psi\rangle$ is not GME.

We next prove that $V_{1234}(|\psi\rangle)$ can't increase under LOCC, i.e., $V_{1234}(\Lambda|\psi\rangle)\leq V_{1234}(|\psi\rangle)$ for any LOCC map $\Lambda$. As concurrence $C$ is non-increasing under LOCC, we only need to prove that $V_{1234}(|\psi\rangle)$ is an increasing function of ${C_{i|jkl}(|\psi\rangle), C_{ij|kl}(|\psi\rangle)}$. Actually, by direct calculation we have
$$
\frac{\partial{V_{1234}(|\psi\rangle)}}{\partial{C_{i|jkl}(|\psi\rangle)}}=
\frac{1}{6} h {(C_{i|jkl}(|\psi\rangle))}^{-\frac{1}{2}}{(C_{j|ikl}(|\psi\rangle)
C_{k|ijl}(|\psi\rangle)C_{l|ijk}(|\psi\rangle))}^{\frac{1}{2}}\geq0
$$
and
$$
\frac{\partial{V_{1234}(|\psi\rangle)}}{\partial{C_{ij|kl}(|\psi\rangle)}}=
\frac{1}{9} a^2 {(C_{ij|kl}(|\psi\rangle))}^{-\frac{2}{3}}{(C_{ik|jl}(|\psi\rangle)
C_{il|jk}(|\psi\rangle))}^{\frac{1}{3}}\geq0.
$$
Thus the monotonicity of $V_{1234}(|\psi\rangle)$ holds and $V_{1234}(|\psi\rangle)$ is non-increasing under LOCC.
\end{proof}

{\textbf{Remark 1}: Here the monotonicity means that the measure does not increase under any LOCC, i.e., $E(\Lambda_{\text{LOCC}}(\rho))\leq E(\rho)$. The strong monotonicity says that if $\rho$ is transformed into a state $\sigma_j$ with probability $p_j$ under LOCC, the measure is non-increasing on average, namely, $ \sum\limits_{j}p_j E(\sigma_j)\leq E(\rho)$ for the LOCC-ensemble $\{p_j, \sigma_j \}$. In this paper, for a well defined GME measure, we only require the monotonicity, but not the strong monotonicity, see \cite{ref16,ref20,ref24}.}

In \cite{ref15} the authors suggest that a GME measure should satisfy a additional condition (iv) GME of the GHZ state is larger than that of the W state, and a measure satisfying all the conditions (i)-(iv) is called a proper GME measure. Here, for four-qubit pure GHZ state
$|GHZ\rangle= \frac{1}{\sqrt2}{(|0000\rangle+|1111\rangle)}$ and the $W$ state
$|W\rangle= \frac{1}{2}{(|1000\rangle+|0100\rangle+|0010\rangle+|0001\rangle)},$
we obtain $V_{1234}(|GHZ\rangle)=0.3333$ due to that $C_{i|jkl}(|\psi\rangle)$ and $C_{ij|kl}(|\psi\rangle)$ are equal to 1. For the W state, we have $V_{1234}(|W\rangle)=0.1875$. Obviously, the GHZ state is more entangled than the W state. Thus $V_{1234}(|\psi\rangle)$ is a proper GME measure in this sense.

\textbf{Example 1} Consider four-qubit pure states $|\psi_A\rangle$, $|\psi_B\rangle$, $|\psi_C\rangle$ and $|\psi_D\rangle$ given by
\begin{equation*}
\begin{split}
|\psi_A\rangle & =\frac{1}{2}{(|0000\rangle+|1011\rangle+|1101\rangle+|1110\rangle)},\\
|\psi_B\rangle & =\frac{1}{2}{(|0000\rangle+|0101\rangle+|1000\rangle+|1110\rangle)},\\
|\psi_C\rangle & =\frac{1}{\sqrt5}{(|0000\rangle+|1111\rangle+|0011\rangle+|0101\rangle+|0110\rangle)},\\
|\psi_D\rangle & =\frac{1}{\sqrt{4(\frac{5\sqrt{113}}{32}+\frac{51}{32})+3}}(\sqrt{4(\frac{5\sqrt{113}}{32}+\frac{51}{32})}
(|0000\rangle+|0101\rangle+|1010\rangle+|1111\rangle)\\
& +(\mathrm{i}|0001\rangle+|0110\rangle-\mathrm{i}|1011\rangle)),
\end{split}
\end{equation*}
Where $\mathrm{i}=\sqrt{-1}$.

From Theorem 1 we have $V_{1234}(|\psi_A\rangle)=0.3468$,
$V_{1234}(|\psi_B\rangle)=0.2788$, $V_{1234}(|\psi_C\rangle)=0.1487$ and
$V_{1234}(|\psi_D\rangle)=0.3407$.

In \cite{ref20}, for an four-qubit quantum pure state $|\psi\rangle$, the genuine multipartite concurrence is defined to be $C_{GME}(|\psi\rangle)=\min\limits_{\gamma_t\in\gamma}\sqrt{ 2(1-Tr(\rho_{\gamma_t})^2)}$, where $\gamma=\{\gamma_t\} $ labels all the different reduced density matrices of $|\psi\rangle\langle\psi|$. Direct calculation shows that $C_{GME}(|\psi_A\rangle)=C_{GME}(|\psi_B\rangle)=0.8660$ and $C_{GME}(|\psi_C\rangle)=C_{GME}(|\psi_D\rangle)=0.8000$.

Clearly, the measure $C_{GME}$ in \cite{ref20} can't tell the difference between the entanglement of $|\psi_A\rangle$ and $|\psi_B\rangle$, or of $|\psi_C\rangle$ and $|\psi_D\rangle$. This fact is that $C_{GME}$ only depends on the length of the shortest edge which is the same for both states, while our GME measure $V_{1234}(|\psi\rangle)$ can distinguish.

\section{GME measure for multipartite pure states}
We next consider general $N$-partite systems with subsystem $1, 2, \cdots , N$, for arbitrary  $N$-partite pure states $|\Psi\rangle\in \mathcal {H}_{1}^{d_1}\otimes \mathcal {H}_{2}^{d_2}\otimes \cdots \otimes \mathcal {H}_{N}^{d_N}$. For simplicity, we denote the concurrence between the subsystem $i$ and the rest ones as $C_{i|\hat{i}}(|\Psi\rangle)$, where $\hat{i}$ stands for all the subsystems without the $i$th one.

For general $N$-partite pure states $|\Psi\rangle$, there are $\lfloor\frac{N}{2}\rfloor$ different types of bipartitions, where $\lfloor\cdot\rfloor$ stands for rounding down. Define $C_{i}(|\Psi\rangle)=C_{i| \hat{i}}(|\Psi\rangle)$,  $C_{ij}(|\Psi\rangle)=C_{ij| \hat{ij}}(|\Psi\rangle)$, $C_{i\overrightarrow{j_2}}(|\Psi\rangle)=C_{i\overrightarrow{j_2}| \widehat{i\overrightarrow{j_2}}}(|\Psi\rangle), \cdots, C_{i\overrightarrow{j_{\lfloor\frac{N}{2}\rfloor-1}}}(|\Psi\rangle)
=C_{i\overrightarrow{j_{\lfloor\frac{N}{2}\rfloor-1}}| \widehat{i\overrightarrow{j_{\lfloor\frac{N}{2}\rfloor-1}}}}(|\Psi\rangle)$, where $\overrightarrow{j_m}=(jj_2\cdots j_m)$ for $i\neq j\neq\cdots\neq j_{\lfloor\frac{N}{2}\rfloor-1}\in\{1,2,\cdots,N\}$. Let $a$ be the geometric mean of $\{C_{i}(|\Psi\rangle)\}$ and $h$ the geometric mean of $\{C_{ij}(|\Psi\rangle), C_{i\overrightarrow{j_2}}(|\Psi\rangle), \cdots, C_{i\overrightarrow{j_{\lfloor\frac{N}{2}\rfloor-1}}}(|\Psi\rangle)\}$,
respectively,
\begin{equation}
a=\sqrt[N]{\prod\limits_{i=1}^N C_{i}(|\Psi\rangle)},
\end{equation}
\begin{equation}
h=\sqrt[2^{N-1}-N-1]{\prod\limits^{P_N^2} C_{ij}(|\Psi\rangle)\prod\limits^{P_N^3} C_{i\overrightarrow{j_2}}(|\Psi\rangle)\cdots\prod\limits^{P_N^{\lfloor\frac{N}{2}\rfloor-1}} C_{i\overrightarrow{j_{\lfloor\frac{N}{2}\rfloor-2}}}(|\Psi\rangle)\prod\limits^{\frac{P_N^{\lfloor\frac{N}{2}\rfloor}}{2}} C_{i\overrightarrow{j_{\lfloor\frac{N}{2}\rfloor-1}}}(|\Psi\rangle)},
\end{equation}
where $P_n^m=\frac{n!}{m!(n-m)!}$.
We construct a concurrence regular polygonal pyramid with $a$ as the length of the bottom edges and $h$ as the height. The area of the base regular polygon is given by,
\begin{equation}\label{area}
S=\frac{Na^2}{4}\cot(\frac{\pi}{N}).
\end{equation}

\begin{thm}
For any $N$-partite pure state $|\Psi\rangle\in \mathcal {H}_{1}^{d_1}\otimes \mathcal {H}_{2}^{d_2}\otimes \cdots \otimes \mathcal {H}_{N}^{d_N}$, the volume of the concurrence regular polygonal pyramid defines a GME measure,
\begin{equation}\label{thm2}
V(|\Psi\rangle)=\frac{Na^2}{12}\cot{(\frac{\pi}{N})}h
\end{equation}
for $N>3$.
\end{thm}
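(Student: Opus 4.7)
The plan is to mirror the two-part argument used for Theorem~1, now for arbitrary $N>3$. First, I would establish the equivalence $V(|\Psi\rangle)>0 \Leftrightarrow |\Psi\rangle$ is genuinely $N$-partite entangled. Since $\cot(\pi/N)>0$ for $N>2$, the sign of $V$ is controlled by $a^{2}h$, and both $a$ and $h$ are geometric means of bipartition concurrences, so $V>0$ iff every factor in either product is strictly positive. The key bookkeeping step is that the $N$-party system admits $2^{N-1}-1$ nontrivial bipartitions in total: exactly $N$ of them have the form $i|\hat i$ (assembled into $a$) and the remaining $2^{N-1}-N-1$ have both sides of size $\geq 2$ (assembled into $h$, which is exactly the root index appearing in Eq.~(6)). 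Hence $V>0$ is equivalent to all bipartition concurrences being nonzero, i.e., $|\Psi\rangle$ is not separable across any bipartition, which is the defining property of genuine $N$-partite entanglement. Conversely, if $V=0$, some $C_{S}(|\Psi\rangle)=0$, the state factorises across $S|\hat S$, and is biseparable.

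For LOCC monotonicity, since every bipartition concurrence is itself non-increasing under LOCC, it suffices to verify that $V$ is nondecreasing as a function of each argument $C_{i}(|\Psi\rangle)$ and $C_{i\overrightarrow{j_{k}}}(|\Psi\rangle)$. Using $\partial a/\partial C_{i}=a/(NC_{i})$ and $\partial h/\partial C_{S}=h/[(2^{N-1}-N-1)\,C_{S}]$, the chain rule yields
\begin{equation*}
\frac{\partial V}{\partial C_{i}(|\Psi\rangle)}=\frac{\cot(\pi/N)}{6}\,\frac{a^{2}h}{C_{i}(|\Psi\rangle)}\geq 0,
\end{equation*}
\begin{equation*}
\frac{\partial V}{\partial C_{i\overrightarrow{j_{k}}}(|\Psi\rangle)}=\frac{Na^{2}\cot(\pi/N)}{12(2^{N-1}-N-1)}\,\frac{h}{C_{i\overrightarrow{j_{k}}}(|\Psi\rangle)}\geq 0,
\end{equation*}
and composing with the LOCC behaviour of each concurrence gives $V(\Lambda|\Psi\rangle)\leq V(|\Psi\rangle)$ for every LOCC map $\Lambda$.

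I expect the main obstacle to be the combinatorial accounting in Eq.~(6): one must check that the iterated product over $k$-vs-$(N-k)$ bipartitions for $2\leq k\leq \lfloor N/2\rfloor$ enumerates each such unordered bipartition exactly once, so that the total exponent indeed equals $2^{N-1}-N-1$. The factor $1/2$ on the balanced term prevents double-counting when $N$ is even and $k=N/2$, while for $k<N/2$ the map $S\mapsto \hat S$ pairs distinct subsets of distinct sizes and no symmetry factor is needed. Once this count is confirmed, which also explains why the assumption $N>3$ is necessary (for $N=3$ the exponent $2^{N-1}-N-1$ vanishes and $h$ is undefined), the remainder of the argument is a direct adaptation of the four-partite case handled in Theorem~1.
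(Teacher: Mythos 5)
Your proposal is correct and follows essentially the same two-part strategy as the paper's own proof: first the equivalence $V(|\Psi\rangle)>0 \Leftrightarrow$ GME via positivity of $\cot(\pi/N)$ and of every bipartition concurrence entering $a$ and $h$, then LOCC monotonicity by showing each partial derivative of $V$ with respect to a bipartition concurrence is nonnegative (your compact forms $\partial V/\partial C_i = \cot(\pi/N)a^2h/(6C_i)$ and $\partial V/\partial C_S = Na^2\cot(\pi/N)h/[12(2^{N-1}-N-1)C_S]$ agree with the paper's expanded expressions). Your added bookkeeping that the $2^{N-1}-1$ nontrivial bipartitions split into $N$ single-party cuts plus $2^{N-1}-N-1$ balanced ones, and your remark on why $N>3$ is needed, are correct clarifications of the definition rather than a different route.
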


\begin{proof}
We first prove that $|\Psi\rangle$ is GME iff $V(|\Psi\rangle)>0$. On one hand, note that the cotangent function $y=\cot x>0$ when $0<x<\frac{\pi}{3}$. Hence for $N>3$, if $V(|\Psi\rangle)>0$, then $a>0$ and $h>0$, that is to say, $\{C_{i}(|\Psi\rangle)\}$ and $\{C_{ij}(|\Psi\rangle), C_{i\overrightarrow{j_2}}(|\Psi\rangle), \cdots, C_{i\overrightarrow{j_{\lfloor\frac{N}{2}\rfloor-1}}}(|\Psi\rangle)\}$ are all positive. Hence $|\Psi\rangle$ is a genuine multipartite entangled state. On the other hand, if $V(|\Psi\rangle)=0$, the volume of the concurrence regular polygonal pyramid is zero iff at least one of the length of bottom edge and the height of pyramid is $0$, namely, at least one of $\{C_{i}(|\Psi\rangle),C_{ij}(|\Psi\rangle), C_{i\overrightarrow{j_2}}(|\Psi\rangle), \cdots, C_{i\overrightarrow{j_{\lfloor\frac{N}{2}\rfloor-1}}}(|\Psi\rangle)\}$ is $0$. Without loss of generality, suppose $C_{24}(|\Psi\rangle)$ $=0$. We get $|\Psi\rangle\langle\Psi|=\rho_{24}\otimes\rho_{13\cdots N}$, i.e., $|\Psi\rangle$ is not genuine multipartite entangled.

We next prove that $V(|\Psi\rangle)$ can't increase under LOCC. Similarly, we only need to prove that $V(|\Psi\rangle)$ is an increasing function of $\{C_{i}(|\Psi\rangle),C_{ij}(|\Psi\rangle), C_{i\overrightarrow{j_2}}(|\Psi\rangle), \cdots, C_{i\overrightarrow{j_{\lfloor\frac{N}{2}\rfloor-1}}}(|\Psi\rangle)\}$. Direct calculation shows that
$$
\frac{\partial{V(|\Psi\rangle)}}{\partial{C_{i}(|\Psi\rangle)}}=\frac{h}{6}
\cot{(\frac{\pi}{N})}{C_{i}(|\Psi\rangle)}^{\frac{2}{N}-1}\prod\limits_{k\neq i}^{N-1} {C_{k}(|\Psi\rangle)}^{\frac{2}{N}}\geq0
$$
and
\begin{eqnarray*}
&\frac{\partial{V(|\Psi\rangle)}}{\partial{C_{ij}(|\Psi\rangle)}}&=
\frac{Na^2}{12(2^{N-1}-N-1)}\cot(\frac{\pi}{N}){C_{ij}(|\Psi\rangle)}^{\frac{1}{2^{N-1}-N-1}-1}
\cdot\\
&&\{\prod\limits_{lm\neq ij}^{P_N^2-1}C_{lm}(|\Psi\rangle)\prod\limits^{P_N^3} C_{i\overrightarrow{j_2}}(|\Psi\rangle)\cdots
\prod\limits^{\frac{P_N^{\lfloor\frac{N}{2}\rfloor}}{2}} C_{i\overrightarrow{j_{\lfloor\frac{N}{2}\rfloor-1}}}(|\Psi\rangle)\}^{\frac{1}{2^{N-1}-N-1}}
\geq0
\end{eqnarray*}
for $k\neq i\in\{1,2,\cdots,N\}$ and $l\neq m\in\{1,2,\cdots,N\}$.
Likewise, the nonnegativity of the derivative of $V(|\Psi\rangle)$ with respect to $\{ C_{i\overrightarrow{j_2}}(|\Psi\rangle), \cdots, C_{i\overrightarrow{j_{\lfloor\frac{N}{2}\rfloor-1}}}(|\Psi\rangle)\} $ holds respectively. Thus the monotonicity of $V(|\Psi\rangle)$ holds and $V(|\Psi\rangle)$ is non-increasing under LOCC. Therefore, $V(|\Psi\rangle)$ is a bona fide measure of GME.
\end{proof}

{Our entanglement measure given by (8) has an intuitive and elegant geometric explanation. It is just the volume of the concurrence regular polygonal pyramid, namely, the product of the length of the bottom edges $a$ and the height $h$. The entanglement and separability can be seen in a geometrical way. Characterizing genuine multipartite entanglement in terms of geometric construction is very useful since it helps to better understand GME, and may also be used to understand relevant quantum correlations such as genuine non-locality and steering.}
We remark that when $N=4$, $V(|\Psi\rangle)$ in Theorem 2 reduces to the measure $V_{1234}(|\Psi\rangle)$ in Theorem 1. In Ref.\cite{ref26} the authors indicated that the entanglement can be viewed as a distance between two separated points in a Bloch sphere. It has an elegant geometric interpretation for tripartite symmetric states. Nevertheless, this approach can not be considered as the appropriate entanglement measure to four-partite quantum systems, since in this case the barycenter coincides with the center of the Bloch sphere. We study the GME measures by using concurrence regular polygonal pyramid, which takes into account all different types of bipartitions. A state is GME iff the value of the GME measure is greater than zero.

Moreover, for $N=3$ we denote concurrence regular tetrahedron with $a$ as the length of bottom edges and $h$ as the height. In this case,
$a=\sqrt[3]{C_{1|23}(|\psi\rangle)C_{2|13}(|\psi\rangle)C_{3|12}(|\psi\rangle)}$ and
$h=1$. Then we have the following genuine tripartite entanglement measure,
\begin{equation}\label{9}
V_{123}(|\psi\rangle)=\frac{\sqrt3}{12}a^2.
\end{equation}

In \cite{ref15} the authors used squared concurrence as three edges of a triangle and proposed the following genuine tripartite entanglement measure (triangle measure) for three-qubit states,
\begin{equation}
F_{123}=[\frac{16}{3}Q\prod\limits_{i=1}^3(Q-{{C_{i|\hat{i}}^2(|\psi\rangle)}})]^{\frac{1}{4}},
\end{equation}
where $Q=\frac{1}{2}\sum\limits _{i=1}^3{C_{i|\hat{i}}^2(|\psi\rangle)}$ for $i\in\{1,2,3\}$. {While the authors didn't discuss whether the measure (10) satisfies monotonicity or strong monotonicity, our measure (9) do satisfy the monotonicity, but not the strong monotonicity. It has been shown in\cite{ref25} that the triangle measure (10) satisfies the strong monotonicity if the squared concurrence is replaced with concurrence. Furthermore, the authors in\cite{ref27} demonstrated that this triangle measure satisfies strong monotonicity if one uses entropy instead of concurrence.}

\textbf{Example 2} Consider the following five-qubit pure state $|\phi_{12345}\rangle$,
\begin{equation*}
\begin{split}
|\phi_{12345}\rangle & =\frac{1}{2}{(|00000\rangle+|01010\rangle+|10100\rangle+|11110\rangle)}.
\end{split}
\end{equation*}
By straightforward calculation we have $V(|\phi_{12345}\rangle)=0$. Hence, $\phi_{12345}$ is not a genuine multipartite entangled state. In fact,
$|\phi_{12345}\rangle=\frac{1}{\sqrt2}{(|00\rangle+|11\rangle)_{13}}
\bigotimes\frac{1}{\sqrt2}{(|000\rangle+|110\rangle)_{245}}$.

\section{Conclusion}
We have studied GME measure based on the concurrence regular polygonal pyramid. By using the volume of the concurrence rectangular pyramid, we have constructed GME measures for arbitrary multipartite quantum systems with arbitrary dimensions.
Detailed examples have shown that our results are better in characterizing genuine multipartite entanglements. Our GME measures have explicit and elegant geometric figures.
The results may highlight further investigations on genuine multipartite correlations besides entanglement.

\noindent\textbf{Declaration of competing interest}

The authors declare that they have no known competing financial interests or personal relationships that could have appeared to
influence the work reported in this paper.\\

\noindent\textbf{Data availability}

No data was used for the research described in the article.\\

\noindent\textbf {Acknowledgements}
This work is supported by the National Key R\&D Program of China under Grant No.(2022YFB3806000), National Natural Science Foundation of China under Grants (12272011, 12075159, 12126351 and 12171044), the Academician Innovation Platform of Hainan Province and Simons Foundation.

\end{document}